\newcommand{\R}{\mathbb R}
\newcommand{\K}{{\mathbb K}}
\newcommand{\ie}{i.e.,\ }
\newcommand\Mvec{\,\mbox{\bf M}}
\newcommand{\SigmaP}{\texttt{Sigma}}
\newcommand{\HarmonicSumsP}{\texttt{HarmonicSums}}
\theoremstyle{plain}
\newtheorem{theorem}{Theorem}
\newtheorem{prop}[theorem]{Proposition}
\theoremstyle{definition}
\newtheorem{example}[theorem]{Example}
\theoremstyle{remark}
\newcounter{mmacnt}
\def\restartmma{\setcounter{mmacnt}{0}}
\title{Inverse Mellin Transform of Holonomic Sequences}
\author{\speaker{Jakob Ablinger}\thanks{This work has been supported in part by the Austrian Science Fund (FWF) grant SFB F50 (F5009-N15).}\\
        Research Institute for Symbolic Computation (RISC)\\
	Johannes Kepler University Linz, Altenberger Stra\ss e 69, A-4040 Linz, Austria \\
        E-mail: \email{jablinge@risc.jku.at}}        
\abstract{We describe a method to compute the inverse Mellin transform of holonomic sequences, that is based on a method to compute the Mellin transform of holonomic functions. Both methods are implemented in the computer algebra package \HarmonicSumsP.}
\begin{document}

\section{Introduction}
\label{sec:1}
In this paper we will present a method to compute the inverse Mellin transform of holonomic sequences and related to it we will revisit a method from~\cite{Ablinger:2014} to compute the Mellin transform of holonomic functions. We emphasize that
these methods are implemented in the computer algebra package \HarmonicSumsP.
Now let $\mathbb K$ be a field of characteristic~0. A function $f=f(x)$ is called \textit{holonomic} (or \textit{D-finite}) if there exist 
polynomials $p_d(x),p_{d-1}(x),\ldots,p_0(x)\in \mathbb K[x]$  (not all $p_i$
being $0$) such that the following holonomic differential equation holds:
%----------------------------------------------------------------------------
\begin{equation}
 p_d(x)f^{(d)}(x)+\cdots+p_1(x)f'(x)+p_0(x)f(x)=0.
\end{equation}
%----------------------------------------------------------------------------
We emphasize that the class of holonomic functions is rather large due to its
closure properties. Namely, if we are given two such differential
equations that contain holonomic functions $f(x)$ and $g(x)$ as solutions, one
can compute holonomic differential equations that contain $f(x)+g(x)$,
$f(x)g(x)$ or $\int_0^x f(y)dy$ as solutions. In other words any composition
of these operations over known holonomic functions $f(x)$ and $g(x)$ is again a
holonomic function $h(x)$. In particular, if for the inner building blocks
$f(x)$ and $g(x)$ the holonomic differential equations are given, also the holonomic
differential equation of $h(x)$ can be computed.\\
Of special importance is the connection to recurrence
relations. A sequence $(f_n)_{n\geq0}$ with $f_n\in\mathbb K$ is called
holonomic (or \textit{P-finite}) if there exist polynomials 
$p_d(n),p_{d-1}(n),\ldots,p_0(n)\in \mathbb K[n]$ (not all $p_i$ being $0$) such
that a holonomic recurrence
%----------------------------------------------------------------------------
\begin{equation}
 p_d(n)f_{n+d}+\cdots+p_1(n)f_{n+1}+p_0(n)f_n=0
\end{equation}
%----------------------------------------------------------------------------
holds for all $n\in\mathbb N$ (from a certain point on).
In the following we utilize the fact that holonomic functions are
precisely the generating functions of holonomic sequences: 
if $f(x)$ is holonomic, then the coefficients 
$f_n$ of the formal power series expansion 
$$f(x) = \sum\limits_{n=0}^{\infty} f_n x^n$$
form a holonomic sequence. Conversely, for a given holonomic sequence
$(f_n)_{n\geq0}$, the function defined by the above sum (\ie its 
generating function) is holonomic (this is true in the sense of formal power series, even if the sum has a zero radius of 
convergence). Note that given a holonomic differential equation for a holonomic function $f(x)$ it is straightforward to 
construct a 
holonomic recurrence for the coefficients of its power series expansion. For a
recent overview of this holonomic machinery and further literature we
refer to~\cite{KauersPaule:2011}.

The paper is organized as follows. In Section \ref{sec:2} we revisit a 
method from~\cite{Ablinger:2014} to compute the Mellin transform of holonomic functions, while in Section \ref{sec:3} we present a method to compute the inverse Mellin transform of holonomic functions.

\section{The Mellin Transform of Holonomic Functions}
\label{sec:2}

\noindent
In the following, we deal with the problem:\\
\textbf{Given} a holonomic function $f(x)$.\\
\textbf{Find,} whenever possible, an expression $F(n)$ given as a linear combination of indefinite
nested sums such that for all $n\in\mathbb N$ (from a certain point on) we have
%----------------------------------------------------------------------------
\begin{equation}\label{Equ:MellRep}
\Mvec[f(x)](n)=F(n).
\end{equation}
In~\cite{Ablinger:2014} three different but similar methods to solve the problem
above were presented. All three methods are implemented in the 
Mathematica package
\texttt{HarmonicSums}~\cite{Ablinger:2010kw,Ablinger:2011te,Ablinger:2013cf,
Ablinger:2013hcp}. All of these methods rely on the holonomic machinery sketched
above. In addition the symbolic
summation package \texttt{Sigma}~\cite{SIG1,SIG2} is used which is based on an
algorithmic difference field theory. %~\cite{Karr:81,Bron:00,Schneider:01,Schneider:05a,Schneider:10b,Schneider:07d,Schneider:08c,Schneider:10a,Schneider:10c,Schneider:13b}. 
Here one of the key ideas is to derive a recurrence relation that contains the Mellin
transform as solution and to execute \SigmaP's recurrence solver that finds all
solutions that can be expressed in
terms of indefinite nested sums and
products~\cite{Petkov:92,Abramov:94,Singer:99,Blumlein:2009tj}; these solutions
are called d'Alembertian solutions.
% Besides that we might end up at an expression of the Mellin transform that is
% given in terms of definite multi-sums. In this case these sums are transformed
% to expressions in terms of indefinite nested sums and products with the help
% of the package {\tt
% EvaluateMultiSums}~\cite{Ablinger:2010pb,Blumlein:2012hg,Schneider:2013zna},
% which is based on \SigmaP. In order to deal with infinite summations the
% package {\tt HarmonicSums} is used in addition that can
% deal, e.g., with asymptotic expansions of the
% arising special functions.
In the following  we revisit one of the methods form~\cite{Ablinger:2014}.

We state the following proposition.

%----------------------------------------------------------------------------
\begin{prop}
 If the Mellin transform of a holonomic function is defined \ie the integral $$\int_0^1x^nf(x)dx$$ exists, then it is 
a holonomic sequence.
\end{prop}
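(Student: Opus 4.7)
My plan is to translate the holonomic ODE satisfied by $f(x)$ into a holonomic recurrence for the sequence $M(n):=\Mvec[f(x)](n)=\int_0^1 x^n f(x)\,dx$ via integration by parts. Heuristically, multiplication by $x$ on the function side corresponds to the shift $n\mapsto n+1$ on the Mellin side, and differentiation of $f$ corresponds, after moving derivatives off $f$, to a polynomial in $n$; so an order-$d$ ODE with polynomial coefficients should produce a P-recurrence with polynomial coefficients for $M(n)$ of comparable order.

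Concretely, I would write the annihilating operator as $L=\sum_{j=0}^{d}p_j(x)\frac{d^j}{dx^j}$, so $Lf=0$. Multiplying by $x^n$, integrating over $[0,1]$ and applying integration by parts $j$ times to each summand gives
$$\int_0^1 x^n p_j(x)f^{(j)}(x)\,dx=B_j(n)+(-1)^j\int_0^1\frac{d^j}{dx^j}\bigl(x^n p_j(x)\bigr)f(x)\,dx,$$
where $B_j(n)$ collects the boundary contributions at $x=0$ and $x=1$. The coefficient $\frac{d^j}{dx^j}(x^n p_j(x))$ expands as $\sum_k c_{j,k}(n)\,x^{n+k}$ with finitely many integers $k$ (bounded in terms of $\deg p_j$ and $j$) and with $c_{j,k}(n)\in\K[n]$ given by falling factorials, so its integral against $f$ is a $\K[n]$-linear combination of shifts $M(n+k)$. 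Summing over $j$ and collecting like shifts yields an inhomogeneous linear recurrence
$$\sum_{k=k_0}^{k_1}q_k(n)\,M(n+k)=b(n),$$
with $q_k\in\K[n]$ not all zero (the leading shift coming from the top-order term $j=d$) and with $b(n)$ built from the boundary evaluations.

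To finish, I would show that $b(n)$ is itself a holonomic sequence: for sufficiently large $n$ the factor $x^n$ annihilates every contribution at $x=0$, while the contributions at $x=1$ are finite sums of the form $r(n)\,f^{(i)}(1)$ with $r\in\K[n]$, hence $b(n)\in\K[n]$ is trivially P-recursive. By the standard closure operation, composing the inhomogeneous recurrence with a holonomic annihilator of $b(n)$ produces a homogeneous holonomic recurrence for $M(n)$, which is the claim. The genuine obstacle in this plan is the integration-by-parts step when $f$ has singularities at the endpoints, which is typical for holonomic functions: one has to justify the boundary formula by working on $[\epsilon,1-\epsilon]$ and passing to the limit, using the hypothesis that the Mellin integral exists to control the boundary behaviour and the $\epsilon$-limits. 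Once that analytic point is settled, the recurrence is produced mechanically.
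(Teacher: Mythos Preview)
Your proposal is correct and follows essentially the same route as the paper: apply the Mellin transform to the holonomic ODE and use integration by parts to turn each term $x^m f^{(p)}(x)$ into a polynomial-in-$n$ shift of $M(n)$ plus boundary contributions at $x=1$, yielding a P-recurrence. You are in fact slightly more explicit than the paper in two respects---you address the vanishing of the $x=0$ boundary terms for large $n$ and the passage from the inhomogeneous recurrence (with polynomial right-hand side) to a homogeneous one via closure---whereas the paper simply records the key identity
\[
\Mvec[x^m f^{(p)}(x)](n)=\frac{(-1)^p (n+m)!}{(n+m-p)!}\,M(n+m-p)+\sum_{i=0}^{p-1}\frac{(-1)^i (n+m)!}{(n+m-i)!}\,f^{(p-1-i)}(1)
\]
and applies it termwise to the ODE.
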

%----------------------------------------------------------------------------

%----------------------------------------------------------------------------
\begin{proof}
Let $f(x)$ be a holonomic function such that the integral $\int_0^1x^nf(x)dx$ exists. Using the properties of the Mellin 
transform we can easily check that
\begin{eqnarray}\label{detore}
\Mvec[x^m f^{(p)}(x)](n)&=&\frac{(-1)^p (n+m)!}{(n+m-p)!}\Mvec[f(x)](n+m-p)+\sum_{i=0}^{p-1}\frac{(-1)^i (n+m)!}{(n+m-i)!}f^{(p-1-i)}(1).
\end{eqnarray}
Finally, we apply the Mellin transform to the holonomic differential equation of
$f(x)$ using the relation above, and we get a 
holonomic recurrence for $\Mvec[f(x)](n).$
\end{proof}
%----------------------------------------------------------------------------

\noindent
Now, a method to compute the Mellin transform is obvious:

\vspace*{2mm}
\noindent
Let $f(x)$ be a holonomic function. In order to compute the Mellin transform 
$\Mvec[f(x)](n),$ we can proceed as follows:
%----------------------------------------------------------------------------
\begin{enumerate}
 \item Compute a holonomic differential equation for $f(x).$
 \item Use the proposition above to compute a holonomic recurrence for $\Mvec[f(x)](n).$
 \item Compute initial values for the recurrence.
 \item \label{RecSol} Solve the recurrence (with \SigmaP) to get a closed form
representation for $\Mvec[f(x)](n).$
\end{enumerate}
%----------------------------------------------------------------------------
\noindent Note that $\SigmaP$ finds all solutions that can be expressed in
terms of indefinite nested sums and products. Hence as long as
such solutions suffice to solve the recurrence in item~\ref{RecSol}, we succeed to compute the Mellin transform 
$\Mvec[f(x)](n).$
%----------------------------------------------------------------------------
\begin{example}We want to compute the Mellin transform of
 $$f(x):=\int_0^x \frac{\sqrt{1-\tau}}{1+\tau} \, d\tau.$$
 We find that
$$(-3+x) f'(x)+2 (-1+x) (1+x) f''(x)=0$$
holds, which leads to the recurrence
%----------------------------------------------------------------------------
\begin{eqnarray*}
6\int_0^1 \frac{\sqrt{1-\tau}}{1+\tau} \, d\tau&=&-2 (n-1) n \Mvec[f(x)](n-2)+3 n \Mvec[f(x)](n-1)+(n+1) (2 n+3) 
\Mvec[f(x)](n).
\end{eqnarray*}
%----------------------------------------------------------------------------
Initial values can be computed easily and solving the recurrence leads to
%----------------------------------------------------------------------------
\begin{eqnarray*}
\Mvec[f(x)](n)&=&\frac{\left(1+(-1)^n\right) \int_0^1 \frac{\sqrt{1-\tau }}{1+\tau } \, d\tau +(-1)^n \left(6+8 \sum\limits_{i=1}^n \frac{(-4)^i}{\binom{2 i}{i}}\right)}{1+n}-\frac{4 (5+4 n) \left(2^n\right)^2}{(1+2 n) (3+2
   n) \binom{2 n}{n}}.
\end{eqnarray*}
%----------------------------------------------------------------------------
\end{example}
%----------------------------------------------------------------------------
% {\tt HarmonicSums} offers the following command to apply 
% this method:
% \begin{mma}
% {
% \In \text{\bf GeneralMellin[GL[$\left\{\frac{(1 - \textnormal{VarGL})^{\frac{1}{2}}}{1 + \textnormal{VarGL}}\right\},x$],$x$]}\\
% \Out
% {\frac{\text{GL}\big(\big\{\frac{\sqrt{1-\text{VarGL}}}{\text{VarGL}+1}\big\},
% 1\big)}{n+1}+(-1)^n \bigg(4 \frac{\prod\limits_{\text{i}_1=1}^n -\frac{2
% \text{i}_1}{-1+2 \text{i}_1}}{(2 n+1) (2
% n+3)}+\frac{\text{GL}\big(\big\{\frac{\sqrt{1-\text{VarGL}}}{\text{VarGL}+1}\big\},1\big)-2}{
% n+1}\bigg)\\{-\frac{4 (-1)^n \sum\limits_{\text{i}_1=1}^n
% \frac{\prod\limits_{\text{i}_2=1}^{\text{i}_1} -\frac{2 \text{i}_2}{-1+2
% \text{i}_2}}{1+2 \text{i}_1}}{n+1}}.}\\
% }
% \end{mma}

Note that this method can be extended to compute regularized Mellin transforms: given a holonomic function $f(x)$ such that 
$$
\int_0^1(x^n-1)f(x)dx
$$
exists, then we can compute
$$
\Mvec[[f(x)]_+](n):=\int_0^1(x^n-1)f(x)dx
$$
using a slight extension of the method above. For example we can compute
$$
\Mvec[[\frac{\log(x)}{1-x}]_+](n)=\int_0^1(x^n-1)\frac{\log(x)}{1-x}dx=\sum_{i=1}^n\frac{1}{i^2}.
$$
\section{The Inverse Mellin Transform of Holonomic Sequences}
\label{sec:3}
\noindent
In the following, we deal with the problem:\\
\textbf{Given} a holonomic sequence $F(n)$.\\
\textbf{Find,} whenever possible, an expression $f(x)$ given as a linear combination of indefinite
iterated integrals such that for all $n\in\mathbb N$ (from a certain point on) we have
%----------------------------------------------------------------------------
\begin{equation*}
\Mvec[f(x)](n)=F(n).
\end{equation*}
As a first step we want to compute a differential equation for~$f(x)$ given a holonomic recurrence for $\Mvec[f(x)](n).$

Analyzing (\ref{detore}) we see that 
\begin{eqnarray}\label{retode}
\Mvec[(-1)^p x^{m+p} f^{(p)}(x)](n)&=&\frac{(n+m+p)!}{(n+m)!}\Mvec[f(x)](n+m)\nonumber\\
&&+\sum_{i=0}^{p-1}\frac{(-1)^{i+p} (n+m+p)!}{(n+m+p-i)!}f^{(p-1-i)}(1).
\end{eqnarray}
Hence we get
\begin{eqnarray}\label{retode1}
n^p\Mvec[f(x)](n+m)&=&\Mvec[(-1)^p x^{m+p} f^{(p)}(x)](n)-a(n)\Mvec[f(x)](n+m)\nonumber\\
&&-\sum_{i=0}^{p-1}\frac{(-1)^{i+p} (n+m+p)!}{(n+m+p-i)!}f^{(p-1-i)}(1),
\end{eqnarray}
where $a(n)\in\K[n]$ with $\deg(a(n))<p.$ We can use this observation to compute the differential equation recursively:
Let
\begin{equation}
 p_d(n)f_{n+d}+\cdots+p_1(n)f_{n+1}+p_0(n)f_n=0
\end{equation}
be the holonomic recurrence for $\Mvec[f(x)](n).$
Let $k:=\max\limits_{0\leq i\leq d}(\deg(p_i(x)))$ and let $c$ be the coefficient of~$n^k$ in the recurrence \ie
$$
c=\sum_{i=0}^d c_i f_{n+i}
$$
for some $c_i\in\K.$ For $0\leq i\leq d$ we replace $c_i n^k f_{n+i}$ by
$$c_i n^k f_{n+i}+c_i(-1)^k x^{k+i} f^{(k)}(x)-c_i\Mvec[(-1)^k x^{k+i} f^{(k)}(x)](n)$$
and apply (\ref{detore}). Considering (\ref{retode1}) we conclude that we reduced the degree of $n.$ 
We apply this strategy until we have removed all appearences of $f_{n+i}.$ At this point we have an equation of the form
$$
 q_l(x)f^{(l)}(x)+\cdots+q_1(x)f'(x)+q_0(x)f(x)+\sum_{j=0}^{k-1}r_j(n)f^{(j)}(1)=0.
$$
where $r_i(n)\in\K[n].$ If all $r_i(n)=0,$ we are done. If not, we differentiate the differential equation. In both cases we end up with a holonomic differential equation for $f(x).$\\
Let us illustrate this strategy using an example.
\begin{example}
%recurrence of S[-1,n]
Consider the recurrence
\begin{equation}\label{examplerec}
(2+n)f_{n+2}-f_{n+1}-(n+1)f_n=0.
\end{equation}
The maximal degree of the coefficients $f_{n+i}$ with $0\leq i\leq 2$ is $1$ and the coefficient of $n$ of the left hand side of~(\ref{examplerec}) is $f_{n+2}-f_{n}.$ We substitute
\begin{eqnarray*}
 n f_{n+2}&\to&n f_{n+2}- x^{3} f'(x)+\Mvec[x^{3} f'(x)](n)\\
 -n f_{n}&\to&-n f_{n}+ x f'(x)-\Mvec[x f'(x)](n)
\end{eqnarray*}
in (\ref{examplerec}) and apply (\ref{detore}). This yields
\begin{equation}\label{examplerec2}
(-x^3+x)f'-f_{n+2}-f_{n+1}=0.
\end{equation}
since $\Mvec[x^{3} f'(x)](n)=-(n+3)f_{n+2}+f(1)$ and $\Mvec[x f'(x)](n)=-(n+1)f_n+f(1)$.
Next we substitute
\begin{eqnarray*}
 -f_{n+2}&\to&-f_{n+2}- x^{2} f(x)+\Mvec[x^{2} f(x)](n)\\
 -f_{n+1}&\to&-f_{n+1}- x f(x)+\Mvec[x f(x)](n)
\end{eqnarray*}
in (\ref{examplerec2}) and apply (\ref{detore}). Since $\Mvec[x^{2} f(x)](n)=f_{n+2}$ and $\Mvec[x f(x)](n)=f_{n+1},$ this yields the differential equation
\begin{equation}\label{examplerec3}
(-x^3+x)f'(x)-(x^2+x)f(x)=0.
\end{equation}
\end{example}

Our strategy to compute the inverse Mellin transform of holonomic sequences can be summarized as follows:
\begin{enumerate}
 \item Compute a holonomic recurrence for $\Mvec[f(x)](n).$
 \item Use the method above to compute a holonomic differential equation for $f(x).$
 \item Compute a linear independent set of solutions of the holonomic differential\\
       equation (using~\HarmonicSumsP).\label{DiffSol}
 \item Compute initial values for $\Mvec[f(x)](n).$
 \item Combine the initial values and the solutions to get a closed form
representation for~$f(x).$
\end{enumerate}
\noindent Note that $\HarmonicSumsP$ finds all solutions that can be expressed in
terms of iterated integrals over hyperexponential alphabets~\cite{Petkov:92,Abramov:94,Singer:99,Bronstein,Ablinger:2014}; these solutions
are called d'Alembertian solutions. Hence as long as
such solutions suffice to solve the differential equation in item~\ref{DiffSol} we succeed to compute~$f(x).$

\begin{example}We want to compute the inverse Mellin transform of
 $$f_n:=(-1)^n\left(\sum_{i=1}^n\frac{(-1)^i\sum_{j=1}^i\frac{1}{j^2}}{i}-\sum_{i=1}^\infty\frac{(-1)^i\sum_{j=1}^i\frac{1}{j^2}}{i}\right)$$
 We find that
\begin{eqnarray*}
0&=&(n+1) (n+2)^2 f_{n}-(n+2) \left(n^2+7 n+11\right) f_{n+1}\\
&&+\left(-n^3-5 n^2-6 n+1\right) f_{n+2}+(n+3)^3 f_{n+3}
\end{eqnarray*}
which leads to the differential equation
%----------------------------------------------------------------------------
\begin{eqnarray*}
0&=&-(x-1)^2 (x+1) x^3 f^{(3)}(x)-(x-1) (2 x-1) (3 x+1) x^2 f''(x)\\
&&-(x-1) (7 x-1) x^2 f'(x)-(x-1) x^2 f(x)
\end{eqnarray*}
that has the general solution
$$s(x)=\frac{c_1}{x+1}+\frac{c_2}{x+1}\int_0^x\frac{1}{y-1}dy+\frac{c_3}{x+1}\int_0^x\frac{\log(y)}{y-1}dy,$$
for some constants $c_1,c_2,c_3.$ In order to determine these constants we compute 
\begin{eqnarray*}
\int_0^1x^0s(x)dx&=&c_1 \log(2)+c_2 \frac{\log(2)^2-\zeta_2}{2}+c_3 \frac{2 \zeta_3-\log(2) \zeta_2}2,\\
\int_0^1x^1s(x)dx&=&c_1 (1-\log(2))+ c_2 \frac{-\log(2)^2+\zeta_2-2}2+c_3 \frac{\log(2) \zeta_2-2 \zeta_3+2}2,\\
\int_0^1x^2s(x)dx&=& c_1 \frac{2 \log(2)-1}{2}+c_2 \frac{2 \log(2)^2-2 \zeta_2+1}{4}+c_3 \frac{-4 \log(2) \zeta_2+8 \zeta_3-3}{8}.
\end{eqnarray*}
Since
\begin{eqnarray*}
f_0=-\sum_{i=1}^\infty\frac{(-1)^i\sum_{j=1}^i\frac{1}{j^2}}{i}; \ f_1=1+\sum_{i=1}^\infty\frac{(-1)^i\sum_{j=1}^i\frac{1}{j^2}}{i}; \ f_2=-\frac{3}8-\sum_{i=1}^\infty\frac{(-1)^i\sum_{j=1}^i\frac{1}{j^2}}{i}\\
\end{eqnarray*}
we can deduce that $c_0=0,c_1=0\textnormal{ and } c_2=1$ and hence
$$
f_n=\Mvec\left[\frac{1}{x+1}\int_0^x\frac{\log(y)}{y-1}dy\right](n).
$$
\end{example}

Note that the method above only works if the result is of the form
$$c^n \Mvec[f(x)](n)+d$$
for some $c,d\in\R.$
However, in general we will find results of the form
$$\sum_{i=0}^kc_i^n \Mvec[f_i(x)](n)+d$$
for some $c_i,d\in\R.$
Hence in order to deal with more general functions we refine our approach and compute $f_i(x)$ and $c_i$ for $i=1$ to $i=k$ one after another. We illustrate this using the following example.
\begin{example}
 We want to compute the inverse Mellin transform of
 $$f_n:=\sum_{i=1}^n\frac{(-1)^i}{i}\sum_{j=1}^i\frac{1}{2^j j}.$$
 We find that
 \begin{eqnarray*}
0&=&4 (1+n) (2+n) f_n-2 (2+n) (7+2 n) f_{n+1}+\left(2-2 n-n^2\right) f_{n+2}+(3+n)^2 f_{n+3}
\end{eqnarray*}
which leads to the differential equation
%----------------------------------------------------------------------------
\begin{eqnarray*}
0&=&\left(-2 x+3 x^2\right) f(x)+\left(4 x-16 x^2+13 x^3\right) f'(x)+\left(8 x-10 x^2-9 x^3+8 x^4\right) f''(x)\\
&&+\left(4 x^2-4 x^3-x^4+x^5\right) f^{(3)}(x)
\end{eqnarray*}
with the following three linear independent solutions
\begin{eqnarray*}
 &&\frac{1}{2+x},\frac{-\log(1-\frac{x}{2})}{2+x},\frac{1}{4 (2+x)}\biggl(\pi ^2+2 \log ^2(2)+4 \log (1-x) \log (2-x)-2 \log ^2(2-x)\\&&+4 \pi  \log \left(\frac{x}{2}-1\right)+\log \left(\frac{(-2+x)^2}{4 x^2}\right)-4 \text{Li}_2\left(\frac{2}{2-x}\right)+4
   \text{Li}_2(x-1)\biggr).
\end{eqnarray*}
We know that $f_n$ has to appear in at least one of the Mellin transforms of these solutions. Indeed we get
\begin{eqnarray*}
 \Mvec[\frac{-\log(1-\frac{x}{2})}{2+x}](n)&=&(-2)^n\biggl(\sum_{i=1}^n\frac{(-1)^i}{i}\sum_{j=1}^i\frac{1}{2^j j}+\log(2)\biggl(\log(3)-\sum_{i=1}^n\frac{(-1)^i}{i}\\&&+\sum_{i=1}^n\frac{1}{(-2)^ii}\biggr)
 +\frac{\pi^2}{12}-\frac{5\log(2)^2}{2}-\operatorname{Li}_2\left(\frac{1}{4}\right)\biggr).
\end{eqnarray*}
Now we can write
\begin{eqnarray*}
 f_n&=&f_n+\frac{1}{(-2)^n}\Mvec[\frac{-\log(1-\frac{x}{2})}{2+x}](n)-\biggl(\sum_{i=1}^n\frac{(-1)^i}{i}\sum_{j=1}^i\frac{1}{2^j j}+\log(2)\biggl(\log(3)-\sum_{i=1}^n\frac{(-1)^i}{i}\\&&+\sum_{i=1}^n\frac{1}{(-2)^ii}\biggr)
 +\frac{\pi^2}{12}-\frac{5\log(2)^2}{2}-\operatorname{Li}_2\left(\frac{1}{4}\right)\biggr)\\
    &=&\frac{1}{(-2)^n}\Mvec[\frac{-\log(1-\frac{x}{2})}{2+x}](n)-\log(2)\biggl(\log(3)-\sum_{i=1}^n\frac{(-1)^i}{i}+\sum_{i=1}^n\frac{1}{(-2)^ii}\biggr)\\
 &&-\frac{\pi^2}{12}+\frac{5\log(2)^2}{2}+\operatorname{Li}_2\left(\frac{1}{4}\right)
\end{eqnarray*}
Next we compute the inverse Mellin transform of 
$
g_n=\sum\limits_{i=1}^n\frac{1}{(-2)^ii}.
$
We find that
 \begin{eqnarray*}
0&=&-2 (2+n) g_{n+1}+(4+n) g_{n+2}+(3+n) g_{n+3}
\end{eqnarray*}
which leads to the differential equation
%----------------------------------------------------------------------------
\begin{eqnarray*}
0&=&x^2(1-x) g(x)+x^2(2-x-x^2) g'(x)
\end{eqnarray*}
with the solution
$
\frac{1}{2+x}.
$
Note that
\begin{eqnarray*}
  \Mvec[\frac{1}{2+x}](n)=(-2)^n\left(\log(3)-\log(2)+\sum_{i=1}^n\frac{1}{(-2)^ii}\right),
\end{eqnarray*}
hence we can write
\begin{eqnarray*}
 f_n&=&\frac{-1}{(-2)^n}\Mvec[\frac{\log(2-x)}{2+x}](n)+\frac{18\log(2)^2-\pi^2}{12}+\operatorname{Li}_2\left(\frac{1}{4}\right)+\log(2)\sum_{i=1}^n\frac{(-1)^i}{i}.
\end{eqnarray*}
It remains to compute the inverse Mellin transform of 
$
h_n=\sum_{i=1}^n\frac{(-1)^i}{i}.
$
We derive the recurrence
 \begin{eqnarray*}
0&=&-(1+n) h_n+ h_{n+1}+(2+n) h_{n+2}
\end{eqnarray*}
which gives rise to the differential equation
%----------------------------------------------------------------------------
\begin{eqnarray*}
0&=&x(1-x)h(x)+x(1-x^2) h'(x)
\end{eqnarray*}
with the solution
$
\frac{1}{1+x}.
$
Since
\begin{eqnarray*}
  \Mvec[\frac{1}{1+x}](n)=(-1)^n\left(\log(2)+\sum_{i=1}^n\frac{(-1)^i}{i}\right)
\end{eqnarray*}
we finally get
\begin{eqnarray*}
 f_n=\frac{-1}{(-2)^n}\Mvec[\frac{\log(2-x)}{2+x}](n)+(-1)^n\log(2)\Mvec[\frac{1}{1+x}](n)+\frac{6\log(2)^2-\pi^2}{12}+\operatorname{Li}_2\left(\frac{1}{4}\right).
\end{eqnarray*}
\end{example}

% {\tt HarmonicSums} offers the following command to apply 
% this method:

\subsection*{Acknowledgements}
I want to thank C. Schneider, C. Raab and J. Bl\"umlein for useful discussions.

\end{document}